\newtheorem{lemma}{Lemma}
\theoremstyle{remark}
\newtheorem{remark}{Remark}
\begin{document}

\title{\LARGE{Analytical Characterization of the Operational \\Diversity Order in Fading Channels}}

\author{Santiago Fernández, J. Alfonso Bail\'on-Mart\'inez, Juan E. Galeote-Cazorla, and F. Javier López-Martínez,~\IEEEmembership{Senior Member,~IEEE,}
\thanks{Manuscript received July 17, 2024; revised August 12, 2024. The review of this paper was coordinated by Prof. Petros Bithas. This work was funded in part by Junta de Andaluc\'ia through grant EMERGIA20-00297, in part by MCIU/AEI/10.13039/501100011033 through grant PID2020-118139RB-I00, in part by the predoctoral grant FPU22/03392, and in part by \'Icaro program within the Advanced Telecommunication Technologies Excellence Unit at University of Granada.}
\thanks{The authors are with the Dept. Signal Theory, Networking and Communications, Research Centre for Information and Communication Technologies (CITIC-UGR), University of Granada, 18071, Granada, Spain. Contact e-mail: $\rm fjlm@ugr.es$.}
\thanks{{\color{black} \textit{Reproducible research:} Code available at https://github.com/s-ff/ODO.}}
\thanks{Digital Object Identifier 10.1109/XXX.2021.XXXXXXX}}
\markboth{IEEE Communications Letters,~Vol.~XX, No.~X, November~2024}%
{Fernández \MakeLowercase{\textit{et al.}}: Analytical Characterization of the Operational Diversity Order in Fading Channels}

\maketitle

\begin{abstract}
We introduce and characterize the operational diversity order (ODO) in fading channels, as a proxy to the classical notion of diversity order at any arbitrary operational signal-to-noise ratio (SNR). Thanks to this definition, relevant insights are brought up in a number of cases: (\textit{i}) We quantify that in \textcolor{black}{dominant} line-of-sight scenarios an increased diversity order is attainable compared to that achieved asymptotically, \textcolor{black}{even in the single-antenna case}; (\textit{ii}) this effect is attenuated, but still visible, in the presence of an additional dominant specular component; (\textit{iii}) the decay slope in Rayleigh product channels increases very slowly, never fully achieving unitary slope for a finite SNR.
\end{abstract}

\begin{IEEEkeywords}
Asymptotic analysis, communication theory, diversity order, fading channels, wireless communications.
\end{IEEEkeywords}

\section{Introduction}
\IEEEPARstart{S}{upported} by continuous advances in microelectronics and computing capabilities, the surge of new generations of wireless technologies enables use cases that seemed futuristic only a couple generations ago \cite{Dang2020}. However, before such new technologies become a commercial reality, their performances need to be thoroughly evaluated and assessed, and insights related to their performance scaling laws and operational limits must be well understood. Taking a deep look into communication theory fundamentals, it is undeniable that asymptotic analysis has been a remarkably useful tool for decades to evaluate system performances \cite{Ventura1997}.

The milestone work \cite{Wang2003} set the basis for the asymptotic performance analysis of wireless communication systems. Under reasonably mild conditions related to the smoothness of the probability density function (PDF) of the signal to noise ratio (SNR), error probability measures can be expressed in the form of $P_{\rm op}\approx \alpha\left({\gamma_{\rm th}}/{\overline\gamma}\right)^{b}$ for sufficiently large average SNR $\overline\gamma$, with $\gamma_{\rm th}$ being the threshold SNR value required for a given performance. The notion of \textit{coding gain} or power offset (captured by $\alpha$) and \textit{diversity order} (DO, captured by $b$) have become ubiquitous in the wireless literature, as a way to characterize performance scaling laws: \textit{how much performance increase can we have by increasing the average SNR a certain amount?} Still today, Wang and Giannakis' power law approximation is used to analyze rather complex architectures \cite{New2024} in terms of their asymptotic outage probability\footnote{Asymptotic expressions for the OP and the uncoded symbol error probability have the same functional form, up to some scale factor \cite{Wang2003}. For simplicity, we will focus in the OP measure as a key benchmark for error performances agnostic to the specific modulation scheme.} (OP).

Now, with great power comes great responsibility, and asymptotic analysis tools need to be used with caution \cite{Dohler2011} to provide valuable guidelines instead of misleading conclusions. First, performances predicted by asymptotic analysis may not be relevant in practice \textcolor{black}{due to} a number of factors: (\textit{i}) low-SNR operating points for users in cellular systems \cite{Dohler2011}, (\textit{ii}) the asymptotic approximation kicks in only at extremely low OP values \cite{Eggers2019} that are not operational even in ultra-reliable regimes. This latter effect becomes relevant especially in line-of-sight (LoS) scenarios, or in the presence of reduced scattering \cite{Ramirez2020}, and can provide false intuitions on the error performance decay at operational SNRs. Second, even though the conditions established in \cite{Wang2003} were mild, they are not met in quite some relevant practical cases. For instance, the DO in keyhole multiple-input multiple-output (MIMO) channels does not admit a power law approximation for equal number of transmit and receive antennas \cite{Sanayei2007}. This includes as special case the Rayleigh product (or cascaded) channel {\cite{Erceg1997}, which is a key building block in relay systems \cite{Hasna2003} and backscatter communications \cite{Griffin2008}. Similarly, the lognormal distribution neither admits a power law approximation under the conditions established in \cite{Wang2003}. For this reason, some alternative asymptotic metrics have been provided to circumvent this issue \cite{Safari2008,Elamassie2020}, and the need for a proxy to DO in finite/operational SNR values has been identified \cite{Narasimhan2006}. 

Aiming to provide further analytical support to this latter work, where the notion of \textit{effective} diversity order was introduced in the context of the diversity-multiplexing trade-off in MIMO systems, we formalize the definition of operational diversity order (ODO) in the context of fading channels, understood as the decay slope for a linear approximation to the log-CDF around the operating SNR. We provide a simple closed-form expression for the ODO valid for any choice of fading distribution, thus not being restricted to the constraints in \cite{Wang2003}. Then, we provide analytical evidences that some fading distributions offer increased diversity orders in \textcolor{black}{dominant} LoS conditions, analyzing the relevant cases of Rician and Two-wave with Diffuse Power (TWDP) fading channels \cite{Durgin2002}. \textcolor{black}{The validity of the ODO is exemplified in a multi-antenna context for selection combining (SC) and maximal ratio combining (MRC) schemes. Finally, the} case of the Rayleigh product channel is analyzed, indicating that the unitary DO is never fully achieved for practical (operational) SNRs.

\vspace{-2mm}
\section{System model and definitions}
Let us define the instantaneous SNR at the receiver side as $\gamma = g \overline{\gamma}$, where $\overline{\gamma}$ is a deterministic positive quantity representing the average received SNR, while $g$ is a channel-dependent non-negative normalized random variable capturing the effects of fading\footnote{\textcolor{black}{While $\gamma$ is a scalar, it can include any combination of transmit and receive filters in multi-antenna set-ups, as well as intermediate processing (e.g., relays, or reflecting surfaces) although the complexity of such processing stages will
determine the ultimate statistical nature of $g$, and hence of $\gamma$.}}. Let $R$ denote the transmission rate of information packets; if packet errors occur due to outages caused by fading, the packet error rate (PER) $\epsilon$ is captured by the OP, defined as the probability that the instantaneous SNR $\gamma$ falls below a certain threshold $\gamma_{\rm th}$, i.e.,
\begin{equation}
    \epsilon\triangleq P_{\rm op}= \Pr(\log_2(1+\gamma)<R) = {\Pr}(\gamma<\gamma_{\rm th}),
    \label{eq.epsilon_1}
\end{equation}
\noindent where $\gamma_{\rm th} = 2^R - 1$ is the minimal required SNR to receive the packet sent at a rate $R$. 
Assuming narrowband channel models with block fading, the power at which the packet is received remains constant and equal to $\Omega$, so that $\overline\gamma=\Omega/N_{\rm 0}$, with $N_{\rm 0}$ being the noise power at the receiver. Equivalently, the received signal power can be expressed as $W=g\Omega$, since $W$ and $\gamma$ are scaled versions of each other through $N_{\rm 0}$, i.e., $\gamma=W/N_{\rm 0}$, and hence a minimal required power to decode
the packet correctly at rate $R$ is given as $W_{\rm th}=\gamma_{\rm th}N_{\rm 0}$. With these definitions, it is possible to rewrite the expression \eqref{eq.epsilon_1} as
\begin{equation}
\label{eqop}
     P_{\rm op}=\Pr(W<W_{\rm th})= \int_{g_{\rm min}}^{W_{\rm th}/\Omega} f_g(g) dg = {\rm F}_{g}\left(W_{\rm th}/\Omega\right),
\end{equation}
\noindent where $f_g(\cdot)$ is the PDF of the power channel coefficient $g$, ${\rm F}_{g}\left(\cdot\right)$ denotes its cumulative distribution function (CDF), and $g_{\rm min}$ denotes the lower limit of the support for the distribution of $g$; classically, $g_{\rm min}=0$ for most fading distributions.

As aforementioned, for a wide variety of channel models the behavior of the lower tail can be approximated by a power-law expression \cite{Wang2003,Eggers2019} in the high SNR regime as
\begin{equation}
   {\rm F}_{g}\left(W_{\rm th}/\Omega\right)\approx \alpha\left( \frac{W_{\rm th}}{{\Omega}}\right)^b=\alpha\left( \frac{\gamma_{\rm th}}{{\overline\gamma}}\right)^b, \label{eqcdf}
\end{equation}
\noindent where $\alpha$ and $b$ are parameters that depend on the actual channel model, and \textit{do not depend on $\left\{\overline\gamma,\Omega\right\}$}. For the case of considering the log-CDF $G_g\left({\cdot}\right)= \log_{10}{\rm F}_{g}\left(\cdot\right)$, the power-law approximation becomes linear in log-log scale, as
\begin{equation}
   G_g\left(W_{\rm th}/\Omega\right)\approx \frac{\alpha^{\rm dB}}{10} +\frac{b}{10}\left(W_{\rm th}^{\rm dB}-\Omega^{\rm dB}\right),\label{eqlogcdf}
\end{equation}
\noindent where the superindex ${}^{\rm dB}$ denotes a transformation in the form $10\log_{10}(\cdot)$ over a given magnitude. Expressions \eqref{eqcdf} and \eqref{eqlogcdf} are asymptotic approximations, which become increasingly valid as $\left\{\overline\gamma,\Omega\right\}\rightarrow\infty$, or equivalently, as $\left\{\gamma_{\rm th}/\overline\gamma,W_{\rm th}/\Omega\right\}\rightarrow 0$.

\section{Operational diversity order}
Our goal is to provide an expression similar to \eqref{eqlogcdf}, that is valid for \textit{any operational} received power $\Omega$ or average SNR $\overline\gamma$. To achieve this, we will introduce the notion of \textit{operational} diversity order (ODO) to capture how certain variations around an \textit{operational} SNR impact the error performance in a similar form as the conventional DO does for sufficiently high (often non-operational) SNRs. Let us consider a generic function $\mathcal{G}\left(x;\Gamma\right)$, where $\Gamma$ denotes an arbitrary set of constant parameters. Then, a linear approximation to $\mathcal{G}\left(x;\Gamma\right)$ around the operating point $x=x_{\rm 0}$ has the form
\begin{equation}
	T(x,x_{\rm 0}; \Gamma) = \mathcal{G}'(x_{\rm 0}; \Gamma)\left(x-x_{\rm 0}\right) + \mathcal{G}(x_{\rm 0};\Gamma),\label{eqlinear}
\end{equation}
\noindent where
\begin{equation}
	\mathcal{G}'(x_{\rm 0};\Gamma) = \left. \frac{\partial \mathcal{G}\left(x;\Gamma\right)}{\partial x}\right|_{x=x_{\rm 0}}.\label{eqlinear2}
\end{equation}
Assuming that the generic function $\mathcal{G}\left(x;\Gamma\right)$ corresponds to the log-CDF of $g$, with $x=\Omega^{\rm dB}$ and $x_{\rm 0}=\Omega_{\rm 0}^{\rm dB}$, we formalize a new linear approximation to the OP in the following lemma:
\begin{lemma}[CDF]
\label{lem:LX2}
Let us consider an OP metric as defined in \eqref{eqop}. Then, the OP around the operation point $\Omega_{\rm 0}$ can be approximated as
\begin{equation}
    P_{\rm op}\approx \alpha_{\rm 0}\left(\frac{\Omega_{\rm 0}}{\Omega}\right)^\delta,\label{eq.lemma}
\end{equation}
where
\begin{align}
\alpha_0 & ={\rm F}_{g}\left(W_{\rm th}/\Omega_{\rm 0}\right),\\
\delta & = \frac{W_{\rm th}}{\Omega_{\rm 0}}\frac{ f_{g}\left(W_{\rm th}/\Omega_{\rm 0}\right)}{{\rm F}_{g}\left(W_{\rm th}/\Omega_{\rm 0}\right)}. 
\label{eq.delta} 
\end{align}
\end{lemma}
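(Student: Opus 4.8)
The plan is to specialize the generic linear expansion in \eqref{eqlinear}--\eqref{eqlinear2} to the case where $\mathcal{G}(x;\Gamma)$ is the log-CDF $G_g(W_{\rm th}/\Omega)=\log_{10}{\rm F}_g(W_{\rm th}/\Omega)$, written in the decibel variable $x=\Omega^{\rm dB}=10\log_{10}\Omega$ and expanded about $x_0=\Omega_0^{\rm dB}$. Once the slope $\mathcal{G}'(x_0)$ and the intercept $\mathcal{G}(x_0)$ are identified, the lemma follows by mapping the resulting affine approximation from the log--log domain back to the linear domain.

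First I would make the dependence on $x$ explicit. Since $\Omega=10^{x/10}$, the CDF argument is $u(x)\triangleq W_{\rm th}/\Omega=W_{\rm th}\,10^{-x/10}$, so $du/dx=-(\ln 10/10)\,u$. Differentiating $G_g=\ln{\rm F}_g(u)/\ln 10$ by the chain rule and evaluating at $u_0\triangleq W_{\rm th}/\Omega_0$ gives
\[
\mathcal{G}'(x_0)=\frac{1}{\ln 10}\frac{f_g(u_0)}{{\rm F}_g(u_0)}\left.\frac{du}{dx}\right|_{x_0}=-\frac{u_0}{10}\frac{f_g(u_0)}{{\rm F}_g(u_0)}=-\frac{\delta}{10},
\]
with $\delta$ exactly as in \eqref{eq.delta}; note that the two $\ln 10$ factors cancel. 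The intercept is simply $\mathcal{G}(x_0)=\log_{10}{\rm F}_g(u_0)=\log_{10}\alpha_0$.

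Substituting both quantities into \eqref{eqlinear} yields $\log_{10}P_{\rm op}\approx\log_{10}\alpha_0-(\delta/10)(\Omega^{\rm dB}-\Omega_0^{\rm dB})$. Using $\Omega^{\rm dB}-\Omega_0^{\rm dB}=10\log_{10}(\Omega/\Omega_0)$ cancels the remaining scale factors, so that $\log_{10}P_{\rm op}\approx\log_{10}\!\left[\alpha_0(\Omega_0/\Omega)^{\delta}\right]$; applying $10^{(\cdot)}$ to both sides recovers \eqref{eq.lemma}.

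The calculation is elementary once set up, so the step I would treat most carefully is the logarithmic (decibel) change of variable: the derivative must be taken with respect to $x=\Omega^{\rm dB}$, not $\Omega$ itself, and it is the mutual cancellation of the $\ln 10$ coming from $\tfrac{d}{dx}\log_{10}(\cdot)$ and the $\ln 10$ coming from $du/dx$, together with the final factor of $10$ in $\Omega^{\rm dB}-\Omega_0^{\rm dB}$, that produces the clean dimensionless exponent $\delta$ and the correct sign of the decay. Everything else is direct substitution into the first-order expansion.
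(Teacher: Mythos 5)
Your proposal is correct and follows essentially the same route as the paper's own proof: both specialize the generic linear approximation \eqref{eqlinear} to the log-CDF viewed as a function of $\Omega^{\rm dB}$, compute the slope at $\Omega_{\rm 0}$ via the chain rule to identify it as $-\delta/10$ with $\delta$ as in \eqref{eq.delta}, and read off the intercept $\log_{10}\alpha_0$. The only cosmetic difference is that you differentiate through the explicit substitution $u(x)=W_{\rm th}10^{-x/10}$ while the paper chains through the identities $\partial/\partial\Omega^{\rm dB}\to\partial/\partial\log\Omega\to\Omega\,\partial/\partial\Omega$, and you spell out the final exponentiation back to \eqref{eq.lemma}, which the paper leaves implicit.
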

\begin{proof}
See appendix \ref{app1}. 
\end{proof} 
In Lemma \ref{lem:LX2}, $\alpha_{\rm 0}$ can be interpreted as an operational power offset. This parameter has a similar interpretation as the $\alpha$ parameter in \eqref{eqcdf}, only that evaluating the CDF at a different point, i.e., $\alpha={\rm F}_{g}\left(1\right)$ when $\Omega=W_{\rm th}$. Now, the key parameter $\delta$ can be seen as an ODO, dominating the log-log decay of the OP at a given operational point $\Omega_{\rm 0}$. Unlike the conventional DO, which only depends on $R$ and the distribution parameters of $g$, the ODO also depends on $\Omega_{\rm 0}$; i.e., depending on the operational value of $\Omega$, a different ODO may be attainable. \textcolor{black}{Note that it is also possible to estimate the ODO from data in a model-agnostic fashion (e.g., estimating the empirical PDF and CDF using \texttt{ksdensity} and \texttt{ecdf} in \texttt{MATLAB}, respectively, or directly from (6) using the \texttt{diff} command over the empirical log-CDF)}. Despite its remarkable simplicity, we note that the analytical formulation in Lemma \ref{lem:LX2} is new in the literature to the best of our knowledge.

\begin{remark}[\textit{How to use the ODO for system design?}]
\label{rem:RX1}
Since the linear approximation for the log-CDF in Lemma \ref{lem:LX2} coincides with the exact OP at $\Omega_{\rm 0}$, we have $P_{\rm op}(\Omega_{\rm 0})=\alpha_{\rm 0}=F_g(W_{\rm th}/\Omega_{\rm 0})$. In the proximity of $\Omega_{\rm 0}$, the linear approximation in \eqref{eq.lemma} predicts that a $c$-fold increase ($c>1$) or decrease ($c<1$) in the received power/SNR (i.e., a $c_{\rm dB}=10\log_{10}(c)$ dB increase/decrease) causes the OP to be scaled by a factor of $c^\delta$. Mathematically:
\begin{equation}
\frac{P_{\rm op}(\Omega=\Omega_{\rm 0})}{P_{\rm op}(\Omega=c\cdot\Omega_{\rm 0})}\approx c^{\delta}.
\end{equation}
In other words, the power increase required to achieve a 1 order of magnitude improvement in terms of OP is well-approximated by
\begin{equation}
\frac{P_{\rm op}(\Omega=\Omega_{\rm 0})}{P_{\rm op}(\Omega=c\cdot\Omega_{\rm 0})}=10\Rightarrow c_{\rm dB} \approx \frac{10}{\delta} ({\rm dB}).
\label{power}
\end{equation}
\end{remark}

\section{Application examples.}
\subsection{Effect of \textcolor{black}{a dominant LoS component}.}
Let us first investigate the role of the ODO in a \textcolor{black}{dominant} LoS scenario; for this purpose, we consider the Rician distribution. By virtue of Lemma \ref{lem:LX2}, and using the well-known expressions for the Rician PDF and CDF, the ODO can be expressed in closed-form as
\begin{equation}
    \delta_{\rm Rice} =\tfrac{W_{\rm th}}{\Omega_{\rm 0}} \tfrac{{(1+K)} e^{-K-\tfrac{(1+K)W_{\rm th}}{\Omega_{\rm 0}}}I_0\left( 2\sqrt{\tfrac{ K (1 + K)W_{\rm th}}{\Omega_{\rm 0}}} \right)}{1 - \text{Q}_1\left(\sqrt{2K}; \sqrt{\tfrac{2(1+K)W_{\rm th}}{\Omega_{\rm 0}}}\right)},
    \label{ODO_Rice}
\end{equation}
where $I_{\nu}(\cdot)$ represents the modified Bessel function of the first kind and order $\nu$, whereas $\text{Q}_n(\cdot)$ denotes the $n$-order Marcum-$Q$ function.
In Fig. \ref{fig:1}, the ODO is evaluated as a function of the average operational received power\footnote{In the sequel, for the sake of simplicity we assume a normalized noise power $N_{\rm0 }=1$, so that the values of $\Omega$ and $W_{\rm th}$ can be regarded as relative to $N_{\rm0 }$, and they can be expressed in dB instead of in absolute power units.} $\Omega_{\rm 0}$, assuming a threshold rate of $R=1.7$ bps/Hz, and varying the $K$ factor that quantifies the amount of LoS. After close inspection of the evolution of $\delta_{\rm Rice}$, some remarks are in order: 
\begin{enumerate}[label=(\roman*)]
    \item The ODO converges to the classical DO (unity for the case of Rician fading) as $\Omega_{\rm 0}\rightarrow\infty$.
    \item Note that, for lower operational values of $\Omega$, the ODO can exhibit larger values than unity as a stronger LoS component is considered. This is in coherence with the observation made in \cite{Eggers2019}, which suggested the possibility for the Rician distribution to offer an increased diversity order in practical operational values of OP.
    \item As the LoS component \textcolor{black}{ceases to be dominant compared to the diffuse component}, (i.e., for values of $K\leq1$), the ODO monotonically increases with $\Omega$, never exceeding the asymptotic limit set by the classical DO.
\end{enumerate}
To better understand the role of the ODO, we exemplify in Fig. \ref{fig:OutProb_Rice} the linear approximations to the OP at different operational values of SNR, for the case of $K=15$. Depending on the operational value of the SNR, we see that the decay slope of the linear approximation to the OP differs from that predicted by the conventional DO. For instance, let us focus on the case of an operational value of $\Omega_{\rm 0}=10$ dB. Following the rationale in Remark \ref{rem:RX1}, increasing (equivalently, decreasing) the operational receive power (or equivalently, SNR) by a factor of $2$ (i.e., 3 dB) makes the OP to be reduced (equivalently, increased) by a factor of $2^{\delta_{\rm Rice}}\approx 21$. To improve the OP performance by one order of magnitude, a power increase of $c_{\rm dB}\approx 2.27$ dB is required when operating at $\Omega_{\rm 0}=10$ dB. If instead the operational point is at $\Omega_{\rm 0}=20$ dB, then $c_{\rm dB}\approx 4.25$ dB is required for a one-order of magnitude improvement. Hence, the ODO becomes a useful tool to predict performance excursions around a given operational point, in a more general (and precise) way than the conventional DO counterpart.
\begin{figure}[t]
 \includegraphics[width=0.82\columnwidth]{./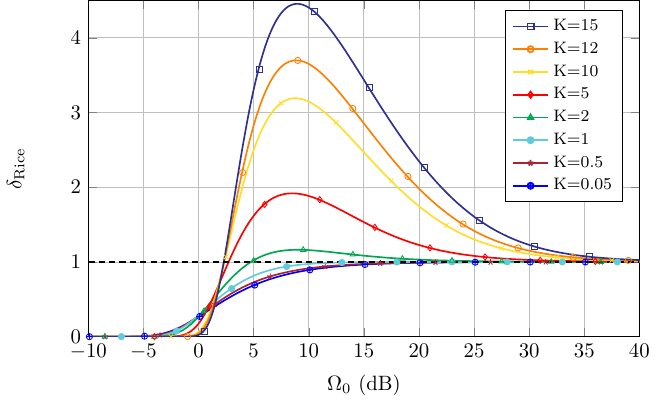}
 \caption{ODO for the Rician fading channel, as a function of $\Omega_{\rm 0}$, for different values of $K$. Solid lines are obtained from \eqref{ODO_Rice}, markers correspond to Monte Carlo (MC) simulations \textcolor{black}{using \texttt{diff} over the empirical log-CDF}. Conventional DO is included \textcolor{black}{as an asymptotic reference (black dashed line) to highlight that the ODO converges to it as $\Omega_{\rm 0}\rightarrow\infty$.}
 }
\label{fig:1}
\end{figure}
\begin{figure}[t]
	\includegraphics[width=0.82\columnwidth]{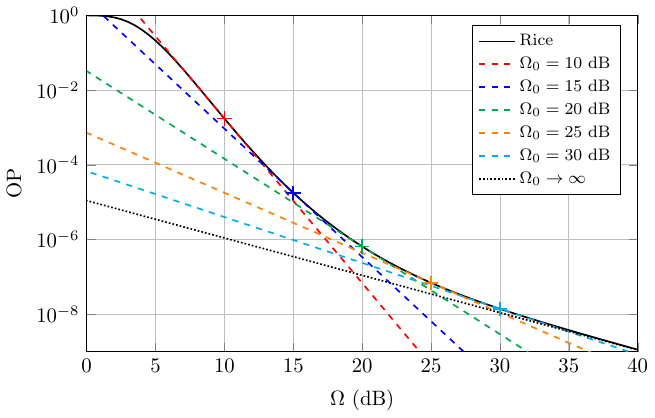}
 \caption{ODO-based linear approximations to the OP for the Rician fading channel at different operational points \textcolor{black}{(each indicated by a colored + mark)}. Parameter values are $R=1.7$ bps/Hz and $K=15$. \textcolor{black}{Solid black line corresponds} to the theoretical expression for the OP.
 .}
\label{fig:OutProb_Rice}
\end{figure}

\subsection{\textcolor{black}{Multiple antennas}}
\textcolor{black}{We now consider the case of multi-antenna reception with i.i.d. branches, considering the reference cases of SC and MRC techniques for exemplary purposes \cite{Brennan1959}. As formally proved in Appendix \ref{app2}, the ODO in an $N$-branch SC scheme is $N$ times the ODO of the single-antenna case. Now, the ODO for the MRC case can be computed from \cite[eq. (27-28)]{Romero2008} as}
\begin{equation}
\textcolor{black}{\delta_{\rm Rice}^{\rm MRC} =\tfrac{W_{\rm th}}{\Omega_{\rm 0}} \tfrac{A_{K,N}\left(W_{\rm th},\Omega_{\rm 0}\right)I_{N-1}\left( 2\sqrt{\tfrac{ K N(1 + K)W_{\rm th}}{\Omega_{\rm 0}}} \right)}{1 - \text{Q}_N\left(\sqrt{2KN}; \sqrt{\tfrac{2(1+K)W_{\rm th}}{\Omega_{\rm 0}}}\right)}},
    \label{ODO_RiceMRC}
\end{equation}
 $A_{K,N}\left(W_{\rm th},\Omega_{\rm 0}\right)=\left(\tfrac{1+K}{\Omega_{\rm 0}} \right)^{\tfrac{N+1}{2}}\left(\tfrac{W_{\rm th}}{NK}\right)^{\tfrac{N-1}{2}}e^{-KN-\tfrac{(1+K)W_{\rm th}}{\Omega_{\rm 0}}}$, which reduces to \eqref{ODO_Rice} for $N=1$. The ODO in the SC and MRC receivers is evaluated in Fig. \ref{fig:new}, assuming $N = 4$. \textcolor{black}{Comparing both cases, we observe: (\textit{i})} both techniques achieve full diversity order (i.e., reaching the value of $N$ as $\Omega_{\rm 0}$ grows); \textcolor{black}{(\textit{ii}) their ODOs have a similar shape, although the maximum value is reached for a smaller $\Omega_{\rm 0}$ in the case of MRC (due to its larger coding gain compared to SC); (\textit{iii})} in both instances, a larger ODO than N is achievable under a dominant LoS component.

\begin{figure}[t]
 \includegraphics[width=0.82\columnwidth]{./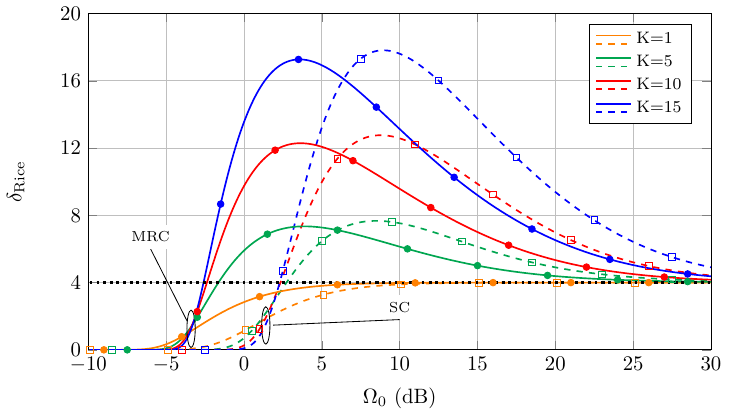}
 \caption{\textcolor{black}{ODO under Rician fading with SC and MRC diversity schemes, for different values of $K$. Solid lines obtained from \eqref{ODO_RiceMRC}, markers indicate MC simulations \textcolor{black}{using \texttt{diff} over the empirical log-CDF}. Conventional DO is included as asymptotic reference (black dashed line) to highlight that the ODO converges to it as $\Omega_{\rm 0}\rightarrow\infty$.}
 }
\label{fig:new}
\end{figure}

\begin{figure}[t]
	\includegraphics[width=0.82\columnwidth]{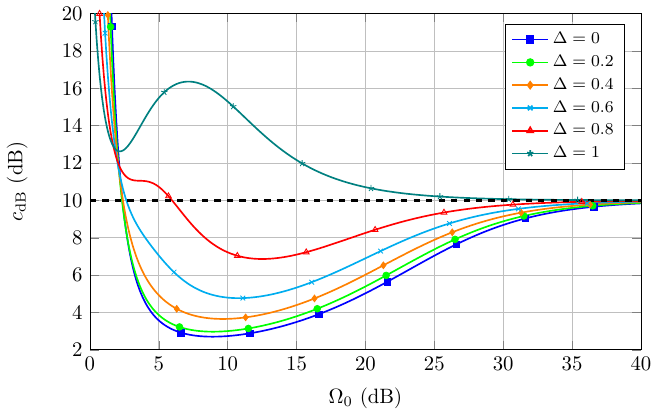}
 	\caption{\textcolor{black}{Power increase (in dB) required to achieve a 1 order of magnitude improvement in OP} for the TWDP fading channel, as a function $\Omega_{\rm 0}$, for different values of $\Delta$. Parameter values are $K=12$ and $R=1.7$ bps/Hz. Solid lines are obtained from \eqref{ODO_TWDP} \textcolor{black}{and \eqref{power}}, markers correspond to MC simulations \textcolor{black}{using \texttt{diff} over the empirical log-CDF}. \textcolor{black}{The value of $c$ predicted by the conventional DO is included as an asymptotic reference (black dashed line) as $\Omega_{\rm 0}\rightarrow\infty$.}}	
\label{fig:NumericalAndTheoretical_Derivatives_TWDP_DeltaVar_K_const}
\end{figure}
\subsection{Effect of a second specular component}
Now, let us move to a more general LoS set-up, by assuming the presence of a second dominant specular component. This is encapsulated by Durgin's TWDP fading model \cite{Durgin2002}, through the parameter $\Delta\in[0,1]$ that captures the amplitude imbalance between these two components. In the limit case of $\Delta=0$, TWDP model reduces to the Rician case, whereas the case with $\Delta=1$ corresponds to a scenario with two LoS components of equal magnitude. Plugging in \eqref{eq.delta} the PDF and CDF expressions in \cite{Rao2015}, we can express
\begin{equation}
\textcolor{black}{    \delta_{\rm TWDP} = \frac{W_{\rm th}\xi_{\rm 0}}{\pi} \frac{\int_{0}^{\pi} e^{-\kappa_{\theta} - {W_{\rm th}\xi_{\rm 0}}}I_0\left( 2 \sqrt{{\kappa_{\theta} \xi_{\rm 0}W_{\rm th}}} \right)}{1 - \tfrac{1}{\pi}  \int_{0}^{\pi} \text{Q}_{1}\left(\sqrt{2 \kappa_{\theta}}; \right.  \left. \sqrt{2\xi_{\rm 0}W_{\rm th}}\right) d\theta}},
    \label{ODO_TWDP}
\end{equation}
\noindent \textcolor{black}{where $\xi_{\rm 0}=\frac{K+1}{\Omega_{\rm 0}}$} and $\kappa_{\theta}=K (1+\Delta \cos \theta)$. To better understand the role of $\Delta$ with respect to the Rician case, the ODO is evaluated in Fig. \ref{fig:NumericalAndTheoretical_Derivatives_TWDP_DeltaVar_K_const} \textcolor{black}{through the power increase metric in \eqref{power},} for different values of $\Delta$. A reference scenario with a dominant LoS propagation is assumed, i.e., $K=12$. We note that as $\Delta$ is increased, the \textcolor{black}{power increase required to improve the OP by one order of magnitude grows. In all instances, such value is heavily dependent on $\Delta$ and, as $\Delta\rightarrow 1$, it can even exceed that predicted by conventional DO ($10$ dB).} In this situation, often associated to a worse-than-Rayleigh condition \cite{Matolak2011} due to the probability of cancellation between the dominant specular components, the ODO for the TWDP case is below unity even for such a large value of $K$.

	\begin{figure}[t]
	\includegraphics[width=0.82\columnwidth]{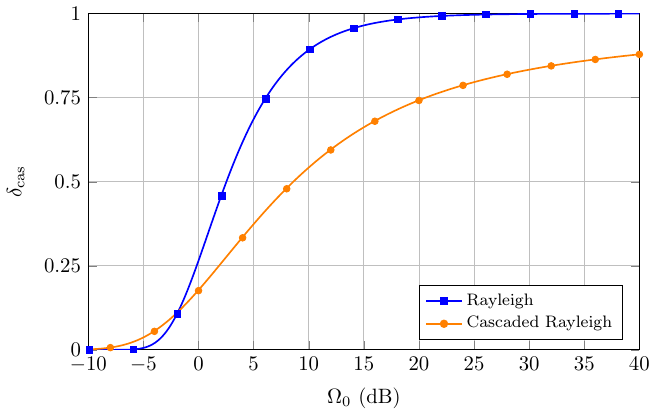}		
	\caption{ODO for the cascaded and single Rayleigh fading channels, as a function of $\Omega_{\rm 0}$. Solid lines are obtained from \eqref{ODO_cas}, markers correspond to MC simulations \textcolor{black}{using \texttt{diff} over the empirical log-CDF}.}
\label{fig:NumericalAndTheoretical_Derivatives_CascadeRay_Ray}
\end{figure}

\subsection{When conventional DO is not applicable}
The general definition of DO in \cite{Wang2003} established a mild condition over the PDF of the SNR, requiring that such a PDF could be approximated by a single polynomial term in the proximity of $0^{+}$. An equivalent condition can be formulated in the Laplace domain, so that the moment generating function (MGF) presents a decay in the form $|s|^{-b}$. However, this condition does not hold in some practical situations, including the case of MIMO keyhole channels with equal number of transmit and receive antennas \cite{Sanayei2007} as a relevant example \textcolor{black}{in double-scattering scenarios \cite{Gesbert2002}}. Let us focus on the baseline case of a single-input single-output scenario, referred to as product (or cascaded) Rayleigh channel \cite{Erceg1997}. In this situation, the MGF presents a decay of the form $\log(s)/s$, and the classical DO fails to capture the asymptotic behavior of the OP in this scenario. Using the well-known expressions for the cascaded Rayleigh PDF and CDF \cite{Eggers2019}, we can obtain a closed-form expression for the ODO as:
\begin{equation}
    \delta_{\rm cas} = \tfrac{W_{\rm th}}{\Omega_{\rm 0}} \tfrac{2 K_0 \left( 2 \sqrt{ \tfrac{W_{\rm th}}{\Omega_{\rm 0}}}\right)}{1 - 2 \sqrt{\tfrac{W_{\rm th}}{\Omega_{\rm 0}}} K_1 \left( 2 \sqrt{ \tfrac{W_{\rm th}}{\Omega_{\rm 0}}}\right)},
    \label{ODO_cas}
\end{equation}
\noindent where $K_{0}(\cdot)$ and $K_1(\cdot)$ correspond to the modified Bessel functions of the second kind and order zero, and one, respectively. The ODO for the cascaded Rayleigh fading channel is represented in Fig. \ref{fig:NumericalAndTheoretical_Derivatives_CascadeRay_Ray}, and compared to the case of a single Rayleigh link, which has a simple form given by
\begin{equation}
\delta_{\rm Ray}=\tfrac{W_{\rm th}}{\Omega_{\rm 0}} \tfrac{e^{-\tfrac{W_{\rm th}}{\Omega_{\rm 0}}}}{1 - e^{-\tfrac{W_{\rm th}}{\Omega_{\rm 0}}}}.
\end{equation} 
\begin{figure}[t]
	\includegraphics[width=0.82\columnwidth]{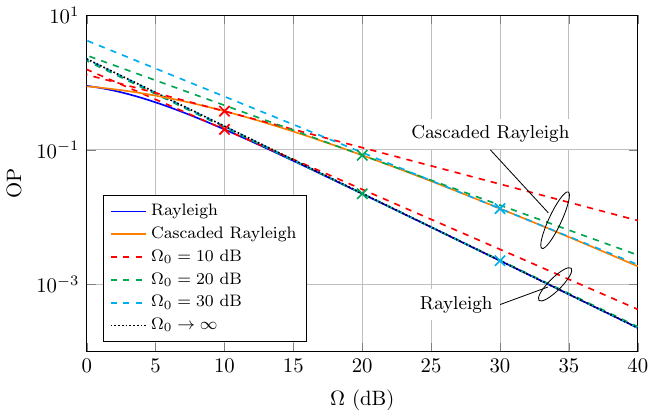}			
 \caption{ODO-based linear approximations to the OP for the single and cascaded Rayleigh fading channels at different operational points \textcolor{black}{(indicated by the colored $\times$ markers)}. Parameter values are $R=1.7$ bps/Hz. Solid lines correspond to the theoretical expressions for the OP.}	\label{fig:OutProb_CascadeRayleigh_Rayleigh}
\end{figure}
While for the latter the ODO quickly converges to unity, the former has a much slower growth and never manages to achieve the same DO as the single Rayleigh link. This is also represented in Fig. \ref{fig:OutProb_CascadeRayleigh_Rayleigh}, where different linear approximations to the log-CDF are represented, using the slope values determined by the ODO. We see that assuming that the cascaded Rayleigh approximately decays with unitary slope may be \textcolor{black}{overly} optimistic \textcolor{black}{(e.g., slope is below 0.5/0.75 for $\Omega_{\rm 0}$ below 8/20 dB, respectively). Hence,} the true decay captured by the ODO must be considered for a proper system design around a given operation point.

\section{Conclusions}
We presented and formally characterized the operational diversity order, as an extension to the classical definition of diversity order but \textcolor{black}{valid in the vicinity of any operational SNR}.  
Thanks to the ODO, the increased diversity order offered by LoS propagation is quantified, and the true diversity order in keyhole channels is analytically computed. The simple and compact definition of the ODO opens up the possibility to directly evaluate this insightful performance metric for any wireless system. \textcolor{black}{Establishing the necessary and sufficient conditions for the ODO to exceed the conventional DO remains an open problem that requires further attention.}

\appendices

\section{Proof of Lemma \ref{lem:LX2}}
\label{app1}

By analogy with \eqref{eqlogcdf}, inspection of \eqref{eqlinear} and \eqref{eq.delta} implies the following equivalence:
\begin{equation}
\label{eqapp1}
    \delta=-10 \left. \frac{\partial \mathcal{G}\left(x;\Gamma\right)}{\partial x}\right|_{x=x_{\rm 0}}=-10 \left. \frac{\partial \log_{10}P_{\rm op}\left(\Omega;\Gamma\right)}{\partial \Omega^{\rm dB}}\right|_{\Omega=\Omega_{\rm 0}},
\end{equation}
where the set of parameters $\Gamma$ includes the distribution parameters for $g$ and the threshold value $W_{\rm th}$ for the sake of shorthand notation. To prove the lemma, the following steps are applied from \eqref{eqapp1}:
\begin{align}
    \delta&=- \left. \frac{\partial \log_{10}P_{\rm op}\left(\Omega;\Gamma\right)}{\partial \log_{10} \Omega}\right|_{\Omega=\Omega_{\rm 0}},\\
    &=- \left. \frac{\partial \log P_{\rm op}\left(\Omega;\Gamma\right)}{\partial \log \Omega}\right|_{\Omega=\Omega_{\rm 0}},\\
    & {\color{black}\stackrel{(a)}{=}} - \Omega \left. \frac{\partial \log P_{\rm op}\left(\Omega;\Gamma\right)}{\partial \Omega}\right|_{\Omega=\Omega_{\rm 0}},\\
    & {\color{black}\stackrel{(b)}{=}} - \frac{\Omega}{\text{F}_g\left(W_{\rm th}/\Omega_{\rm}\right)} \left. \frac{\partial P_{\rm op}\left(\Omega;\Gamma\right)}{\partial \Omega}\right|_{\Omega=\Omega_{\rm 0}}.
    \label{eqapp3}
\end{align}
\textcolor{black}{\noindent where $(a)$ and $(b)$ follow after applying the chain rule.}

From \eqref{eqop}, and applying again the chain rule in the integral definition of the OP, we have
\begin{equation}
    \left. \frac{\partial P_{\rm op}\left(\Omega;\Gamma\right)}{\partial \Omega}\right|_{\Omega=\Omega_{\rm 0}}=\left.-\frac{W_{\rm th}}{\Omega^2}f_g\left(W_{\rm th}/\Omega_{\rm}\right)\right|_{\Omega=\Omega_{\rm 0}}.\label{eqapp4}
\end{equation}
Finally, combining \eqref{eqapp3} and \eqref{eqapp4}, the proof is complete. 

\section{\textcolor{black}{ODO under SC reception}}
\textcolor{black}{Let $W_{\rm SC}=g_{\rm SC}\Omega$ be the instantaneous received power after SC in an $N$-branch receiver with i.i.d. branches \cite{Brennan1959}, with $\Omega$ representing the average received power per branch. Noting that $F_{g_{\rm SC}}(x)=F_{g}(x)^N$ and $f_{g_{\rm SC}}(x)=Nf_{g}(x)F_{g}(x)^{N-1}$, the following identity holds:}
\begin{equation}
    \textcolor{black}{\delta_{\rm SC} = \frac{W_{\rm th}}{\Omega_{\rm 0}}\frac{ f_{g_{\rm SC}}\left(W_{\rm th}/\Omega_{\rm 0}\right)}{{\rm F}_{g_{\rm SC}}\left(W_{\rm th}/\Omega_{\rm 0}\right)} = N \delta.}
\end{equation}
\label{app2}

\bibliographystyle{ieeetr}
\bibliography{refs}

\vfill

\end{document}